\theoremstyle{plain}
\newtheorem{theorem}{Theorem}
\theoremstyle{definition}
\newtheorem{definition}{Definition}
\newcommand{\ba}{\mathbf{a}}
\newcommand{\bb}{\mathbf{b}}
\newcommand{\bc}{\mathbf{c}}
\newcommand{\bx}{\mathbf{x}}
\newcommand{\cC}{\mathcal{C}}
\newcommand{\desc}{\mathrm{desc}}
\begin{document}
\title{Probabilistic Existence Results for Separable Codes}
\author{Simon R.\ Blackburn\\
Department of Mathematics\\
Royal Holloway University of London\\
Egham, Surrey TW20 0EX\\
United Kingdom}
\maketitle

\begin{abstract}
Separable codes were defined by Cheng and Miao in 2011, motivated by applications to the identification of pirates in a multimedia setting. Combinatorially, $\overline{t}$-separable codes lie somewhere between $t$-frameproof and $(t-1)$-frameproof codes: all $t$-frameproof codes are $\overline{t}$-separable, and all $\overline{t}$-separable codes are $(t-1)$-frameproof. Results for frameproof codes show that (when $q$ is large) there are $q$-ary $\overline{t}$-separable codes of length $n$ with approximately $q^{\lceil n/t\rceil}$ codewords, and that no $q$-ary $\overline{t}$-separable codes of length $n$ can have more than approximately $q^{\lceil n/(t-1)\rceil}$ codewords.

The paper provides improved probabilistic existence results for $\overline{t}$-separable codes when $t\geq 3$. More precisely, for all $t\geq 3$ and all $n\geq 3$, there exists a constant $\kappa$ (depending only on $t$ and $n$) such that there exists a $q$-ary $\overline{t}$-separable code of length $n$ with at least $\kappa q^{n/(t-1)}$ codewords for all sufficiently large integers $q$. This shows, in particular, that the upper bound (derived from the bound on $(t-1)$-frameproof codes) on the number of codewords in a $\overline{t}$-separable code is realistic.

The results above are more surprising after examining the situation when $t=2$. Results due to Gao and Ge show that a $q$-ary $\overline{2}$-separable code of length $n$ can contain at most $\frac{3}{2}q^{2\lceil n/3\rceil}-\frac{1}{2}q^{\lceil n/3\rceil}$ codewords, and that codes with at least $\kappa q^{2n/3}$ codewords exist. Thus optimal $\overline{2}$-separable codes behave neither like $2$-frameproof nor $1$-frameproof codes.

The paper also observes that the bound of Gao and Ge can be strengthened to show that the number of codewords of a $q$-ary $\overline{2}$-separable code of length $n$ is at most
\[
q^{\lceil 2n/3\rceil}+\tfrac{1}{2}q^{\lfloor n/3\rfloor}(q^{\lfloor n/3\rfloor}-1).
\]

Key words: Separable codes, probabilistic constructions, multimedia watermarking
\end{abstract}

\section{Introduction}
\label{sec:introduction}

Let $F$ be a finite set of size $q$, where $q\geq 2$. Let $n$ be an integer, where $n\geq 2$. For a subset $X\subseteq F^n$ of words over $F$ of length $n$, and for $k\in \{1,2,\ldots,n\}$, we define
$X(k)$ to be the set of $k$th components of words in $X$, and we define the set $\desc(X)$ of \emph{descendants} of $X$ by
\[
\desc(X)=X(1)\times X(2)\times\cdots \times X(n).
\]
For example, if $X=\{0000,0111,0012\}$ then $X(1)=\{0\}$, $X(2)=X(3)=\{0,1\}$, $X(4)=\{0,1,2\}$ and $\desc(X)$ is the following set of words:
\[
\{0000,0100,0010,0110,0001,0101,0011,0111,0002,0102,0012,0112\}.
\]
Note that $X\subseteq \desc(X)$.

The notion of a descendant set arose in connection with watermarked multimedia content. In these applications, each user is given a copy of the data watermarked in a different way: these watermarks correspond to a $q$-ary word of length $n$. The watermarks corresponding to users are chosen to lie in some carefully constructed code $\cC$. An adversary receives $t$ copies with different watermarkings (from $t$ compromised users), the watermarks corresponding to a set $X$ of $t$ codewords. The adversary then attempts to construct a new copy of the data. In certain well-motivated models, an adversary can create data with any watermark taken from $\desc(X)$. Various security models and applications have been considered; different classes of codes are defined, depending on the goals of the adversary. See Blackburn~\cite{Blackburn_survey} for a survey of early results in the area; see, for example, Blackburn, Etzion, Stinson and Zaverucha~\cite{BlackburnEtzionStinsonZaverucha08} for more recent work on generalisations. As an example of the combinatorial objects that have been considered in this context, $t$-frameproof codes arose from a paper of Boneh and Shaw~\cite{BonehShaw98} that addresses the problem of preventing an adversary from framing an innocent user:

\begin{definition}
Let $t$ be a positive integer. A code $\cC\subseteq F^n$ is \emph{$t$-frameproof} if for all $X\subseteq\cC$ with $|X|\leq t$, we have $\desc(X)\cap\cC=X$.
\end{definition}

If users' watermarks are taken from a $t$-frameproof code $\cC$, an adversary who obtains copies of the data from $t$ users is unable to construct a watermarked copy of that data that would have been given to another user.

More recently, models have been considered for an `averaging attack' adversary; see Trappe, Wu, Wang and Liu~\cite{TrappeWuWangLiu03}. Here $\desc(X)$, rather than an element of $\desc(X)$, may be extracted from the data produced by the adversary. Cheng and Miao~\cite{ChengMiao11} introduced separable codes in the context of this model:

\begin{definition}
A code $\cC\subseteq F^n$ is a \emph{$\overline{t}$-separable code} if for all distinct $X,X'\subseteq\cC$ with $|X|\leq t$ and $|X'|\leq t$, we have $\desc(X)\not=\desc(X')$.
\end{definition}

Cheng and Miao~\cite{ChengMiao11} pointed out that there is a close relationship between $\overline{t}$-separable codes and frameproof codes:
\begin{theorem}\emph{\cite[Lemmas~4.5 and~4.6]{ChengMiao11}}
\label{thm:frameproof_separating}
Let $t$ be an integer, $t\geq 2$. Any $t$-frameproof code is a $\overline{t}$-separable code. Any $\overline{t}$-separable code is a $(t-1)$-frameproof code.
\end{theorem}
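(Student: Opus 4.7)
The plan is to handle the two implications separately, each by a short contradiction argument that unpacks the definitions.

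For the first implication, I would suppose $\cC$ is $t$-frameproof and take distinct $X,X'\subseteq\cC$ with $|X|,|X'|\le t$. Assuming for contradiction that $\desc(X)=\desc(X')$, by symmetry I can pick $x\in X\setminus X'$. Since $X\subseteq\desc(X)=\desc(X')$, we have $x\in\desc(X')\cap\cC$, and the $t$-frameproof property (applied to $X'$, which has size at most $t$) forces $x\in X'$, a contradiction.

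For the second implication, I would argue by contrapositive. Suppose $\cC$ is not $(t-1)$-frameproof, so there exists $X\subseteq\cC$ with $|X|\le t-1$ and some $y\in(\desc(X)\cap\cC)\setminus X$. Set $X'=X\cup\{y\}$; then $X'\ne X$ and $|X'|\le t$. The key observation is that $y\in\desc(X)$ means the $k$th coordinate of $y$ lies in $X(k)$ for every $k$, so $X'(k)=X(k)\cup\{y_k\}=X(k)$ and hence $\desc(X')=\desc(X)$. This pair witnesses failure of the $\overline{t}$-separable property.

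Neither direction presents a genuine obstacle — both reduce to a careful manipulation of the descendant operator and the observation that adjoining a descendant of $X$ to $X$ does not alter any coordinate projection $X(k)$. The only subtlety to watch is keeping the size bounds consistent (size $\le t-1$ versus $\le t$) so that the enlarged set $X'$ still fits within the $\overline{t}$-separable hypothesis.
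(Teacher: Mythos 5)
Your proposal is correct and follows essentially the same route as the paper: the first implication via the observation that $X\subseteq\desc(X)=\desc(X')$ puts an element of the symmetric difference into $\desc(X')\cap\cC$ outside $X'$, and the second by adjoining the offending descendant $y$ to $X$ and noting that this leaves every coordinate projection $X(k)$, hence $\desc(X)$, unchanged. No gaps.
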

\begin{proof}
Suppose $\cC$ is not a $\overline{t}$-separable code. Let $X$ and $X'$ be distinct subsets of $\cC$ such that $|X|\leq t$, $|X'|\leq t$ and $\desc(X)=\desc(X')$. By interchanging $X$ and $X'$ if necessary, we may assume that $X'\not\subseteq X$. Clearly $X\subseteq\desc(X)$ and $X'\subseteq \desc(X')=\desc(X)$, and so $X\cup X'\subseteq \desc(X)\cap \cC$. The set $X$ shows that $\cC$ is not $t$-frameproof: $\desc(X)$ contains a non-empty set $X'\setminus X$ of codewords not in X. So the first statement of the theorem follows.

Suppose $\cC$ is a code that is not $(t-1)$-frameproof. So there exists $X\subseteq \cC$ and $\bc\in \cC\setminus X$ with $|X|\leq t-1$ and $\bc\in\desc(X)$. Define $X'=X\cup\{\bc\}$. Then the sets $X$ and $X'$ show that $\cC$ is not $\overline{t}$-separable, since $\desc(X)=\desc(X')$. So the theorem follows.
\end{proof}
Cheng and Miao also relate $\overline{t}$-separable codes to matrices arising in combinatorial group testing, and apply these codes to generalisations of the structures used by Trappe \emph{et al.} in their multimedia application.

Several other papers have considered $\overline{t}$-separable codes. Cheng, Ji and Miao~\cite{ChengJiMiao12} provide a detailed analysis of $\overline{2}$-separable codes of lengths~$2$  and~$3$, finding optimal codes in many cases. Cheng, Fu, Jiang, Lo and Miao~\cite{ChengFuJiangLoMiao15} improve the length $2$ results using graph theory and packings derived from projective planes.
Gao and Ge~\cite{GaoGe14} provide probabilistic constructions of, and upper bounds for the number of codewords in, $\overline{2}$-separable codes (which we discuss further below). Finally, various related codes have been studied: secure separable codes~\cite{JiangChengMiao15} and multimedia IPP codes~\cite{ChengFuJiangLoMiaoIPP14,JiangChengMiaoWuIPP14}.

Consider a $q$-ary $\overline{t}$-separable code $\cC$ of length $n$ containing as large a number of codewords as possible. Bounds~\cite{Blackburn03} on frameproof codes, and Theorem~\ref{thm:frameproof_separating} above, imply that there exists a constant $\kappa$, depending only on $n$ and $t$, such that 
\begin{equation}
\label{eqn:frameproof_bounds}
\kappa q^{\lceil n/t\rceil}\leq |\cC|\leq (t-1)q^{\lceil n/(t-1)\rceil}
\end{equation}
for all sufficiently large $q$. If the true optimal value of $|\cC|$ is close to the lower bound, we can think of $\overline{t}$-separable codes as being close to $t$-frameproof codes; if the optimal value of $|\cC|$ is close to the upper bound, we can think of $\overline{t}$-separable codes as being close to $(t-1)$-frameproof codes. This paper asks: Which (if any) of these two situations occur?

The answer to this question in the case $t=2$ follows from the results of Gao and Ge~\cite{GaoGe14}. Firstly they give a probabilistic argument (reproduced in Section~\ref{sec:2_lower_bound} below) which shows that there are at least $\kappa q^{2n/3}$ codewords in a $\overline{2}$-separable code for all sufficiently large $q$, for some constant $\kappa$.  Secondly, they combine a bound on $\overline{2}$-separable codes of length $2$ due to Cheng, Ji and Miao~\cite{ChengJiMiao12} with an observation that $\overline{2}$-separable codes can naturally be used to construct shorter $\overline{2}$-separable codes over a larger alphabet, to prove the following upper bound. Let $\cC$ be a $q$-ary $\overline{2}$-separable code of length $n$. Then
\[
|\cC|\leq \tfrac{3}{2}q^{2\lceil n/3\rceil}-\tfrac{1}{2}q^{\lceil n/3\rceil}.
\]
So the answer to our question when $t=2$ is `Neither situation occurs', as an optimal $\overline{2}$-separable code has about $q^{2n/3}$ codewords for large $q$.

In fact, the upper bound of Gao and Ge can be tightened to improve the exponent in the leading term (when $3$ does not divide $n$). We show (Theorem~\ref{thm:2_upper_bound} below) that a $\overline{2}$-separable code $\cC$ satisfies 
\[
|\cC|\leq q^{\lceil (2/3)n\rceil}+\frac{1}{2}q^{\lfloor (1/3)n\rfloor}(q^{\lfloor (1/3)n\rfloor}-1)\leq (3/2)q^{\lceil (2/3)n\rceil}.
\]

When $t\geq 3$, we will show that the answer to our question is `$\overline{t}$-separable codes are close to being $(t-1)$-frameproof codes'. This follows from the upper bound~\eqref{eqn:frameproof_bounds} from $(t-1)$-frameproof codes, combined with Theorem~\ref{thm:general_lower_bound} below, which says that there exist $q$-ary $\overline{t}$-separable codes of length $n$ with at least $\kappa q^{n/(t-1)}$ codewords for all sufficiently large $q$, where $\kappa$ is a constant depending only on $n$ and $t$. Just like Gao and Ge's argument when $t=2$, Theorem~\ref{thm:general_lower_bound} is probabilistic, using random choice with expurgation. However, the naive generalisation of Gao and Ge's argument will not give a good enough bound: some care is needed when defining the set of `bad' events that should be excluded.

We comment that Gao and Ge~\cite[Corollary~4.8]{GaoGe14} also provide a probabilistic existence result for $\overline{t}$-separable codes for $t\geq 3$. Indeed, they show that a $q$-ary $\overline{t}$-separable code of length $n$ with $M$ codewords exists for some value of $n$ satisfying
\[
n\leq \frac{q^t}{(q-1)^t}\left(1+\ln \binom{M}{1,t}\right).
\]
However, this result is much weaker than Theorem~\ref{thm:general_lower_bound} below for many parameter sets. For example, when $n$ and $t$ are fixed and $q\rightarrow\infty$, the result of Gao and Ge only guarantees the existence of a code with approximately $e^{(n-\kappa)/(t+1)}$ codewords for some constant $\kappa$. 

The structure of the remainder of the paper is as follows. In Section~\ref{sec:upper_bounds}, after briefly providing (Theorem~\ref{thm:general_upper_bound}) a proof of the upper bound~\eqref{eqn:frameproof_bounds} on the number of codewords in a $\overline{t}$-separable code, we state and prove our improvement (Theorem~\ref{thm:2_upper_bound}) of Gao and Ge's upper bound for $t=2$. In Section~\ref{sec:2_lower_bound} we provide a recap of  Gao and Ge's probabilistic existence result (Theorem~\ref{thm:2_lower_bound}) for $\overline{2}$-separable codes. We include the full proof of this result for completeness, to establish notation, and to provide an introduction to the more complicated proof in the general case. This general case (Theorem~\ref{thm:general_lower_bound}) is contained in Section~\ref{sec:general_lower_bound}.

\section{Upper bounds}
\label{sec:upper_bounds}

Theorem~\ref{thm:general_upper_bound} below is due to Cheng, Ji and Miao~\cite[Lemma~3.7 and Theorem~3.8]{ChengJiMiao12}; their proof uses a bound of Blackburn~\cite[Theorem~1]{Blackburn03} on the maximum cardinality of a $(t-1)$-frameproof code. We include a proof here for completeness. For simplicity, we use a slightly weaker bound on the maximum cardinality of frameproof codes due to Staddon, Stinson and Wei~\cite{StaddonStinsonWei01}.

\begin{theorem} 
\label{thm:general_upper_bound}
Let $q$, $n$ and $t$ be positive integers such that $q\geq 2$, $n\geq 2$ and $t\geq 2$.  Let $\cC$ be a $q$-ary $\overline{t}$-separable code of length $n$. 
Then $|\cC|\leq (t-1)q^{\lceil n/(t-1)\rceil}$.
\end{theorem}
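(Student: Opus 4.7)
The proof is essentially a two-line reduction to known results on frameproof codes, and the plan is to set up that reduction cleanly.

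First I would invoke Theorem~\ref{thm:frameproof_separating}: since $t\geq 2$, the hypothesis that $\cC$ is $\overline{t}$-separable immediately implies that $\cC$ is a $(t-1)$-frameproof code. This converts the problem to bounding the size of a $(t-1)$-frameproof code of length $n$ over an alphabet of size $q$.

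Next I would quote the Staddon--Stinson--Wei bound~\cite{StaddonStinsonWei01}, which states that any $q$-ary $s$-frameproof code of length $n$ has at most $s\, q^{\lceil n/s\rceil}$ codewords. Applying this with $s=t-1$ yields $|\cC|\leq (t-1)\,q^{\lceil n/(t-1)\rceil}$, as required. (For the degenerate case $t=2$, one obtains the trivial bound $|\cC|\leq q^n$, which is consistent with the fact that every code is $1$-frameproof.)

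There is no real obstacle here: the only ingredients are Theorem~\ref{thm:frameproof_separating}, already proved, and the cited bound on frameproof codes, which the authors have chosen to import rather than reprove. If one wished to avoid the external citation, one could instead sketch the standard projection argument for frameproof codes (partition the $n$ coordinates into $t-1$ blocks of size at most $\lceil n/(t-1)\rceil$, and observe that for any such block $B$ the codewords of $\cC$ cannot all share the same projection onto $B$ without violating the $(t-1)$-frameproof property, yielding $|\cC|\leq (t-1)\, q^{\lceil n/(t-1)\rceil}$ by a pigeonhole-style count). Either way, the result follows in a handful of lines.
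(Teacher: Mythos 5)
Your proof is correct and follows the same route as the paper: reduce to a $(t-1)$-frameproof code via Theorem~\ref{thm:frameproof_separating} and then apply the Staddon--Stinson--Wei bound, which the paper simply chooses to reprove inline (via the sets $U_L$ of codewords uniquely determined by their coordinates in a block $L$ of a partition of the coordinates into $t-1$ parts). One small caution on your optional sketch: the correct pigeonhole statement is not that the codewords of $\cC$ ``cannot all share the same projection onto $B$'', but that every codeword must be \emph{uniquely determined} by its projection onto at least one block --- otherwise it lies in the descendant set of a set of at most $t-1$ other codewords, violating $(t-1)$-frameproofness --- which gives at most $q^{\lceil n/(t-1)\rceil}$ codewords attributable to each block and hence the stated bound.
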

\begin{proof}
By Theorem~\ref{thm:frameproof_separating}, $\cC$ is a $(t-1)$-frameproof code.

For a subset $L\subseteq\{1,2,\ldots,n\}$ of components, let $U_L$ be the set of all codewords that are uniquely determined by their components in $L$. (So if $\bc\in U_L$ no other codeword agrees with $\bc$ on all positions in $L$.) Clearly, $|U_L|\leq q^{|L|}$.

Let $L_1,L_2,\ldots,L_{t-1}$ be a partition of $\{1,2,\ldots,n\}$ into $t-1$ subsets of size at most $\lceil n/(t-1)\rceil$. We claim that $\cC=\bigcup_{i=1}^{t-1}U_{L_i}$. Clearly this claim is enough to prove the theorem, for then
\[
|\cC|\leq \sum_{i=1}^{t-1}|U_{L_i}|\leq\sum_{i=1}^{t-1}q^{\lceil n/(t-1)\rceil}=(t-1)q^{\lceil n/(t-1)\rceil}.
\]

Suppose for a contradiction that there exists $\bc\in\cC\setminus \bigcup_{i=1}^{t-1}U_{L_i}$. Then for each $i\in\{1,2,\ldots ,t-1\}$ there exists $\bx_i\in \cC\setminus\{\bc\}$ that agrees with $\bc$ on all components in $L_i$. So, setting $X=\{\bx_1,\bx_2,\ldots,\bx_{t-1}\}$ we see that $\bc\in \desc(X)$. Since $\bc\in \cC\setminus X$, this contradicts the fact that $\cC$ is $(t-1)$-frameproof. So the claim, and therefore the theorem, follows.
\end{proof}

When $t=2$ the bound of Theorem~\ref{thm:general_upper_bound} is vacuous. However, Gao and Ge~\cite{GaoGe14} obtained a non-trivial bound in this case. We tighten their bound as follows.

\begin{theorem}
\label{thm:2_upper_bound}
Let $q$ and $n$ be integers such that $q\geq 2$ and $n\geq 2$. If $\cC$ is a $q$-ary $\overline{2}$-separable code of length $n$, then
\[
|\cC|\leq q^{\lceil (2/3)n\rceil}+\frac{1}{2}q^{\lfloor (1/3)n\rfloor}(q^{\lfloor (1/3)n\rfloor}-1)\leq (3/2)q^{\lceil (2/3)n\rceil}.
\]
\end{theorem}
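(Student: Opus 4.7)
The plan is to view $\cC$ as a code of length $2$ by grouping its coordinates into two blocks, and then apply an edge count for $C_4$-free bipartite graphs. This is the same strategy Gao and Ge use, but with a more economical partition: a block $L_1$ of size $\lceil 2n/3\rceil$ and a block $L_2$ of size $\lfloor n/3\rfloor$, so that $|L_1|+|L_2|=n$ exactly (whereas Gao and Ge effectively split into three equal parts of size $\lceil n/3\rceil$, wasting coordinates when $3\nmid n$).

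The first step is a concatenation lemma: if $\cC\subseteq F^n$ is $\overline{2}$-separable and $\{1,\ldots,n\}=L_1\sqcup L_2$, then identifying each $\bc\in\cC$ with the pair $(\bc|_{L_1},\bc|_{L_2})\in F^{L_1}\times F^{L_2}$ produces a $\overline{2}$-separable length-$2$ code over alphabets of sizes $Q_1=q^{|L_1|}$ and $Q_2=q^{|L_2|}$. The pair case is the crux: if $X=\{\bc_1,\bc_2\}$ and $X'=\{\bc_1',\bc_2'\}$ give the same length-$2$ descendant set, then $\{\bc_1|_{L_k},\bc_2|_{L_k}\}$ and $\{\bc_1'|_{L_k},\bc_2'|_{L_k}\}$ agree as subsets of $F^{L_k}$ for $k=1,2$, and reading this coordinate by coordinate forces $X(i)=X'(i)$ for every $i\in\{1,\ldots,n\}$; hence $\desc(X)=\desc(X')$ in $\cC$, contradicting $\overline{2}$-separability. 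The singleton cases are straightforward since a singleton's descendant set has a single element.

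The second step is a length-$2$ edge count: any $\overline{2}$-separable code $C$ of length $2$ over alphabets of sizes $Q_1\geq Q_2$ satisfies $|C|\leq Q_1+\binom{Q_2}{2}$. Representing $C$ as a bipartite graph with parts $F_1,F_2$ and one edge per codeword, $\overline{2}$-separability amounts to being $C_4$-free: the only non-trivial collision of descendant sets in length $2$ is the rectangle $\{(a_1,b_1),(a_2,b_2)\}$ versus $\{(a_1,b_2),(a_2,b_1)\}$ with $a_1\neq a_2$ and $b_1\neq b_2$. For each $a\in F_1$ the $\binom{d(a)}{2}$ pairs of neighbours of $a$ in $F_2$ must all be distinct across different $a$ (otherwise a $C_4$ appears), so $\sum_a\binom{d(a)}{2}\leq\binom{Q_2}{2}$. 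Using $\binom{d}{2}\geq d-1$ for $d\geq 1$ and summing over $a$ with $d(a)\geq 1$ gives $|C|-Q_1\leq\binom{Q_2}{2}$.

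Combining these two steps with $|L_1|=\lceil 2n/3\rceil$ and $|L_2|=\lfloor n/3\rfloor$ gives the first inequality of the theorem; the second follows from $2\lfloor n/3\rfloor\leq\lceil 2n/3\rceil$, which bounds $\binom{q^{\lfloor n/3\rfloor}}{2}$ by $\tfrac{1}{2}q^{\lceil 2n/3\rceil}$. The conceptually subtle step is the concatenation lemma (showing length-$2$ descendant equality pulls back to original descendant equality); the length-$2$ count is a standard forbidden-$C_4$ bound.
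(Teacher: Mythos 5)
Your proposal is correct and follows essentially the same route as the paper: the paper likewise splits the coordinates into a prefix of length $\lceil 2n/3\rceil$ and a suffix of length $\lfloor n/3\rfloor$, and your $C_4$-free pair-counting step is exactly its claim that $|T_\ba\cap T_{\ba'}|<2$ followed by $\sum_\ba\binom{t_\ba}{2}\leq\binom{q^s}{2}$ and the optimization $|\cC|\leq q^r+\binom{q^s}{2}$. The only cosmetic difference is that you package the reduction as an explicit concatenation lemma and the count as a bipartite-graph bound, whereas the paper argues directly with the suffix sets $T_\ba$.
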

\begin{proof}
Suppose $\cC$ is a code over the alphabet $F$, where $|F|=q$.

Define $r=\lceil (2/3)n\rceil$ and define $s=\lfloor (1/3)n\rfloor$. Note that $r+s=n$.

For a word $\bx\in F^n$, define $\pi(\bx)\in F^r$ to be the prefix of $\bx$ of length $r$, and define $\sigma(\bx)\in F^s$ to be the suffix of $\bx$ of length $s$.

For $\ba\in F^r$, define $T_\ba\subseteq F^s$ to be the set of suffices of words in $\cC$ beginning with $\ba$. So
\[
T_\ba=\{\sigma(\bx)\mid\bx\in C\cap\pi^{-1}(\ba)\}.
\]
Write $t_\ba=|T_\ba|$. Clearly
\[
|\cC|=\sum_{\ba\in F^r}t_\ba
\]
since $\cC$ is a disjoint union of the codewords contributing to each of the sets~$T_\ba$.

We claim that for any distinct $\ba,\ba'\in F^r$ we have $|T_\ba\cap T_{\ba'}|< 2$. To show this, we assume for a contradiction that there exist distinct $\bb,\bb'\in T_\ba\cap T_{\ba'}$. Then the words $\ba\bb, \ba\bb',\ba'\bb$ and $\ba'\bb'$ all lie in $\cC$. But then the sets $X=\{\ba\bb,\ba'\bb'\}$ and $X'=\{\ba\bb',\ba'\bb\}$ have the same set of descendants, contradicting the fact that $\cC$ is $\overline{2}$-separable. So our claim follows.

Each set $T_\ba$ contains $\binom{t_\ba}{2}$ subsets of $F^s$ of size $2$. The claim above shows that no two subsets  $T_\ba$ and $T_{\ba'}$ can contain the same subset of size $2$. Since $F^s$ contains $\binom{q^s}{2}$ subsets of size $2$,
\[
\sum_{\ba\in F^r}\tfrac{1}{2}t_\ba(t_\ba-1)\leq \tfrac{1}{2}q^s(q^s-1).
\]
The maximum sum of non-negative integer variables $u_\ba$ such that
\[
\sum_{\ba\in F^r}\tfrac{1}{2}u_{\ba}(u_\ba-1)\leq \tfrac{1}{2}q^{s}(q^s-1)
\]
is achieved when $u_{\ba}=2$ for exactly $\frac{1}{2}q^s(q^s-1)$ values of $\ba$, and $u_{\ba}=1$ for the remaining values. Hence
\begin{align*}
|\cC|&=\sum_{\ba\in F^r}t_\ba\leq\sum_{\ba\in F^r} u_\ba\\
&\leq 2\left(\tfrac{1}{2}q^s(q^s-1)\right)+\left(q^r-\tfrac{1}{2}q^s(q^s-1)\right)\\
&=q^r+\tfrac{1}{2}q^s(q^s-1)
\end{align*}
and so the theorem follows.
\end{proof}

\section{A lower bound for $\overline{2}$-separable codes}
\label{sec:2_lower_bound}

This section is an exposition of a lower bound for the cardinality of a $\overline{2}$-separable code due to Gao and Ge~\cite{GaoGe14}. 

\begin{theorem}
\label{thm:2_lower_bound}
Let $q$ and $n$ be integers such that $q\geq 2$ and $n\geq 2$. Then there exists a $q$-ary $\overline{2}$-separable code of length $n$ with $M$ codewords, where
\[
M=\max_{N\in\mathbb{N}}\left\{N-\left\lfloor \binom{N}{2}q^{-n}+3\binom{N}{4}2^nq^{-2n}\right\rfloor\right\}.
\]
In particular, for each value of $n$ there exists a constant $\kappa$ (depending only on $n$) with the following property: For all sufficiently large integers $q$, there exists $q$-ary $\overline{2}$-separable code of length $n$ with at least $\kappa q^{(2/3)n}$ codewords. 
\end{theorem}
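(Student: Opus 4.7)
The plan is to use random choice with expurgation, following the standard Erd\H{o}s-style template. I choose codewords $\bc_1,\bc_2,\ldots,\bc_N$ independently and uniformly from $F^n$. Before applying expurgation I need to classify the configurations that can prevent $\cC=\{\bc_1,\ldots,\bc_N\}$ from being $\overline{2}$-separable: some $X\neq X'$ with $|X|,|X'|\leq 2$ must satisfy $\desc(X)=\desc(X')$. A short case analysis rules out the singleton and mixed cases (using $\desc\{\bc\}=\{\bc\}$, and the observation that $\desc\{\bc_1,\bc_2\}=\desc\{\bc_1,\bc_3\}$ forces $\bc_2=\bc_3$), leaving exactly two kinds of offending configurations: (a) a coincidence $\bc_i=\bc_j$ with $i<j$, and (b) four distinct codewords $\{\bc_i,\bc_j,\bc_k,\bc_l\}$ together with one of the three $2+2$ partitions yielding equal descendant sets.

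I then bound the expected number of each. For (a), linearity of expectation gives $\binom{N}{2}q^{-n}$. For (b), fix a pairing, say $\{\bc_1,\bc_2\}$ against $\{\bc_3,\bc_4\}$. The $n$ coordinates are independent, and a direct count of the $q^4$ equally likely outcomes at a single coordinate shows
\[
\Pr\bigl[\{c_1(i),c_2(i)\}=\{c_3(i),c_4(i)\}\bigr]=\frac{q+2q(q-1)}{q^4}=\frac{2q-1}{q^3}\leq\frac{2}{q^2},
\]
so the probability over all $n$ coordinates is at most $2^nq^{-2n}$. Summing over the $\binom{N}{4}$ subsets and the three pairings yields the expected count $3\binom{N}{4}2^nq^{-2n}$. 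Hence some realisation of the random process produces at most $\lfloor\binom{N}{2}q^{-n}+3\binom{N}{4}2^nq^{-2n}\rfloor$ bad events in total; deleting one codeword per event destroys them all and leaves a $\overline{2}$-separable code of the claimed cardinality $M$.

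The step requiring the most care is the classification of bad events, and in particular verifying that the 4-tuple term really carries the factor $q^{-2n}$ rather than only $q^{-n}$ — this is what makes $q^{2n/3}$, rather than $q^{n/2}$, the right asymptotic order. For the asymptotic statement, I set $N=\lfloor cq^{2n/3}\rfloor$ where $c=c(n)$ is a sufficiently small positive constant chosen so that the dominant error term, of order $(c^4 2^n/8)q^{2n/3}$, is at most $N/4=(c/4)q^{2n/3}$; the other term $\binom{N}{2}q^{-n}=O(q^{n/3})$ is automatically $o(N)$. Then expurgation removes at most $N/2$ codewords, and $M\geq(c/2)q^{2n/3}$ for all sufficiently large $q$, giving the required $\kappa$.
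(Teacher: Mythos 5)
Your proposal is correct and follows essentially the same argument as the paper: random choice with expurgation, with bad events being coincident pairs (expected count $\binom{N}{2}q^{-n}$) and quadruples admitting a $2+2$ split with equal descendant sets (expected count $3\binom{N}{4}2^nq^{-2n}$), followed by the choice $N=\Theta(q^{2n/3})$. Your per-coordinate count $(2q-1)/q^3$ is exact where the paper only bounds it by $2/q^2$, and you are slightly more explicit than the paper in disposing of the overlapping-pair case, but these are cosmetic differences.
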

\begin{proof}
The theorem follows by a straightforward application of the technique of random choice with expurgation. The details are as follows. Let $N$ be an integer to be chosen later. Choose words $\bc_1,\bc_2,\ldots,\bc_N\in F^n$ uniformly and independently at random.

For $i,j\in\{1,2,\ldots,N\}$ with $i<j$, let $E_{i,j}$ be the event that $\bc_i=\bc_j$. So each event $E_{i,j}$ occurs with probability $q^{-n}$. Let $R_{i,j}$ be the indicator random variable taking the value $1$ when $E_{i,j}$ occurs, otherwise taking the value $0$. The expected value of $R_{i,j}$ is $q^{-n}$.

For distinct $i,j,i',j'\in\{1,2,\ldots,n\}$ with $i<j$, with $i<i'$ and with $i'<j'$, set $X=\{\bc_i,\bc_j\}$ and $X'=\{\bc_{i'},\bc_{j'}\}$. Let $F_{i,j,i',j'}$ be the event that $\desc(X)=\desc(X')$. This happens only if $X(k)=X'(k)$ for all $k\in\{1,2,\ldots ,n\}$. Now, $X(k)=X'(k)$ with probability at most $2/q^2$: once the values of the $k$th components of $\bc_{i'}$ and $\bc_{j'}$ are chosen, the $k$th component of $\bc_{i}$ must agree with one of the $k$th components of $\bc_{i'}$ and $\bc_{j'}$ (this event occurs with probability at most $2/q$), and then there is a unique value that $k$th component of $\bc_j$ must take so that $X(k)=X'(k)$ (and the $k$th component of $\bc_j$ takes this value with probability $1/q$). So $F_{i,j,i',j'}$ occurs with probability at most $(2/q^2)^n$. Let $S_{i,j,i',j'}$ be the indicator random variable taking the value $1$ when $F_{i,j,i',j'}$ occurs, otherwise taking the value $0$. The expected value of $S_{i,j,i',j'}$ is at most $2^nq^{-2n}$. Note that there are $3\binom{N}{4}$ choices for the quadruple $(i,j,i',j')$. To see this, first note that there are $\binom{N}{4}$ choices for the subset $L=\{i,j,i',j'\}$. Once $L$ is fixed, $i$ is determined as the smallest element of~$L$. There are then $3$ choices for $j\in L\setminus\{i\}$. Finally, $i'$ and $j'$ are determined as (respectively) the smallest and largest elements from $L\setminus\{i,j\}$.

Let $Z=\sum_{i,j}R_{i,j}+\sum_{i,j,i',j'}S_{i,j,i',j'}$. So $Z$ is a random variable which counts the number of the events $E_{i,j}$ and $F_{i,j,i',j'}$ that occur. By linearity of expectation, the expected value of $Z$ is at most
\[
\sum_{i,j}q^{-n}+\sum_{i,j,i',j'}2^nq^{-2n}=\binom{N}{2}q^{-n}+3\binom{N}{4}2^nq^{-2n}.
\]
In particular, there is a choice of words $\bc_i$ such that at most $b$ of the events $E_{i,j}$ and $F_{i,j,i',j'}$ occur where
\[
b=\left\lfloor \binom{N}{2}q^{-n}+3\binom{N}{4}2^nq^{-2n}\right\rfloor.
\]
We fix such a choice of words.

We form a set $I\subseteq \{1,2,\ldots,N\}$ by adding $i$ to $I$ whenever $E_{i,j}$ or $F_{i,j,i',j'}$ occurs. So $|I|\leq b$. We define $\cC=\{\bc_i\mid i\in \{1,2,\ldots,n\}\setminus I\}$. By our construction of $I$ and by definition of the event $E_{i,j}$, all the codewords in $\cC$ are distinct (we have removed one of each pair $\{\bc_i,\bc_j\}$ with $\bc_i=\bc_j$). Thus $|\cC|\geq N-b$.

We claim that $\cC$ is a $\overline{2}$-separable code. Let $X,X'\subseteq \cC$ be distinct non-empty subsets of codewords with $|X|\leq 2$ and $|X'|\leq 2$. If $|X|=|X'|=2$, then $\desc(X)\not=\desc(X')$, since by our construction of $I$ and the definition of the events $F_{i,j,i',j'}$ we have removed one of the four codewords involved when we constructed $\cC$ from $\{\bc_1,\bc_2,\ldots,\bc_N\}$. So, without loss of generality, we may assume $|X'|=1$, which implies that $|\desc(X')=1$. But then $\desc(X)=\desc(X')$ implies that $X=X'$ in this situation, contradicting the fact that $X$ and $X'$ are distinct. So $\cC$ is $\overline{2}$-separable. Thus the first statement of the theorem follows.

To prove the last statement of the theorem, we fix $n$ and choose $N=\lfloor 2^{-(1/3)n}q^{(2/3)n}\rfloor$. We observe that
\begin{align*}
|\cC|&=N-\left\lfloor \binom{N}{2}q^{-n}+3\binom{N}{4}2^nq^{-2n}\right\rfloor\\
&\geq N-N^2q^{-n}-\tfrac{3}{24}N^42^nq^{-2n}\\
&\geq (2^{-(1/3)n}-2^{-(2/3)n}q^{-(1/3)n}-\tfrac{3}{24}2^{-(1/3)n}-o(1))q^{(2/3)n}\\
&>\tfrac{1}{2}2^{-(1/3)n}q^{(2/3)n}
\end{align*}
when $q$ is sufficiently large. So the final statement of the theorem follows with $\kappa=\frac{1}{2}2^{-(1/3)n}$. We remark that we have not tried to optimise the constant $\kappa$ here: see Gao and Ge~\cite{GaoGe14} for a tight value. 
\end{proof}

\section{A lower bound for $\overline{t}$-separable codes when $t\geq 3$}
\label{sec:general_lower_bound}

\begin{theorem}
\label{thm:general_lower_bound}
Let $n$ and $t$ be fixed integers such that $n\geq 2$ and $t\geq 3$. There exists a positive constant $\kappa$, depending only on $n$ and $t$, so that there is a $q$-ary $\overline{t}$-separable code of length $n$ with at least $\kappa q^{n/(t-1)}$ codewords for all sufficiently large integers $q$. 
\end{theorem}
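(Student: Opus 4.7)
My plan is to generalise the random-choice-with-expurgation strategy of Theorem~\ref{thm:2_lower_bound}. I would choose $\bc_1,\ldots,\bc_N \in F^n$ uniformly and independently at random, with $N=\lfloor\kappa q^{n/(t-1)}\rfloor$ for a small constant $\kappa=\kappa(n,t)$ to be determined later. The aim is to pick a family of bad events whose expected total is at most $N/2$, so that expurgation leaves, with positive probability, a $\overline{t}$-separable code of at least $N/2$ codewords.

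The natural family consists of the duplicate events $E_{i,j}:\bc_i=\bc_j$, together with events $F_{X,X'}:\desc(\{\bc_i:i\in X\})=\desc(\{\bc_i:i\in X'\})$ for distinct subsets $X,X'\subseteq\{1,\ldots,N\}$ with $|X|,|X'|\le t$. I would classify the $F$-events by the triple $(|X|,|X'|,|X\cap X'|)$. For every configuration other than $(t,t,t-1)$, a direct generalisation of the union-bound calculation in Theorem~\ref{thm:2_lower_bound} gives expected count $O(N)$ at $N=\Theta(q^{n/(t-1)})$, comfortably absorbed by choosing $\kappa$ small.

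The configuration $(t,t,t-1)$ --- with $X=A\cup\{\bc_b\}$, $X'=A\cup\{\bc_c\}$, $|A|=t-1$ and $\bc_b\neq\bc_c$ --- is the main obstacle, and the place where the paper's comment in Section~1 about ``care in defining bad events'' must bite. A naive per-triple union bound gives probability $\Theta(q^{-n})$ per triple (dominated by the case $\bc_b=\bc_c$), so the expected number of such events is $\Theta(N^{t+1}/q^n)=\Theta(N^2)$ at the target value of $N$ --- far too many to expurgate by removing just a constant fraction of the codewords.

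The hard part will be to redefine these bad events more coarsely. After the duplicate events are expurgated, each surviving $(t,t,t-1)$ event forces at every coordinate $\ell$ either $c_b(\ell)=c_c(\ell)$ or $\{c_b(\ell),c_c(\ell)\}\subseteq A(\ell)$, so the pair $(\bc_b,\bc_c)$ is pinned down once $A$ and the mismatch set $L=\{\ell:c_b(\ell)\neq c_c(\ell)\}$ are fixed. My approach would be to repackage the bad events by parameterising them with $A$ together with further structural data (say, the set $L$ and the common partial word of $\bc_b$ and $\bc_c$ on $\{1,\ldots,n\}\setminus L$), and then to bound the expected number of codeword pairs fitting each pattern. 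Making the sum come out at $O(N)$ for $N=\kappa q^{n/(t-1)}$ --- genuinely improving the naive exponent $n/t$ up to $n/(t-1)$ --- is where the real technical work lies.
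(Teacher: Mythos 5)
Your overall strategy (random choice with expurgation, classifying the bad descendant coincidences by their intersection pattern, and isolating the configuration $|X|=|X'|=t$, $|X\cap X'|=t-1$ as the obstruction) matches the paper's, and your diagnosis of why the naive union bound fails there is correct. But the proposal stops exactly at the step that makes the theorem true, and the repair you sketch is not shown to close the gap. Excluding only the duplicate events $\bc_b=\bc_c$ is not enough: conditional on $\bc_b\neq\bc_c$, the probability that $X=A\cup\{\bc_b\}$ and $X'=A\cup\{\bc_c\}$ have equal descendant sets is still dominated by pairs $\bc_b,\bc_c$ agreeing in almost every coordinate (a single mismatched coordinate already contributes about $q^{-(n+1)}$ per triple), so the expected number of surviving bad triples is of order $N^{t+1}q^{-(n+1)}=\Theta(q^{2n/(t-1)-1})$, which exceeds $N$ whenever $n>t-1$. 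Repackaging the triples by the pattern $(A,L,w)$ does not by itself reduce this count: the expected number of realized patterns is at most the expected number of (pattern, pair) incidences, which equals the expected number of bad triples, so you still need a further idea to explain which codeword is expurgated and why only $O(\epsilon N)$ expurgations occur.

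The paper's resolution, which your sketch circles around but does not reach, is a quantitative threshold on the agreement between $\bc_b$ and $\bc_c$. Redefine the pairwise bad event $E_{b,c}$ to be ``$\bc_b$ and $\bc_c$ agree in at least $n/(t-1)$ coordinates''; the expected number of such pairs is at most $N^2 2^n q^{-n/(t-1)}=O(\epsilon^2 q^{n/(t-1)})$, and expurgating one codeword from each such pair destroys every troublesome triple built on a high-agreement pair, regardless of which $A$ accompanies it. For the remaining triples the two odd codewords disagree in more than $n-n/(t-1)$ coordinates, and each disagreeing coordinate forces both symbols into the $(t-1)$-element set $A(k)$, costing $((t-1)/q)^2$ rather than $1/q$; this drives the per-triple probability down to at most $(2t)^{2n}q^{-(2-1/(t-1))n}$, and since $(t+1)-(2t-3)=4-t\leq 1$ for $t\geq 3$, the union bound over all $N^{t+1}$ such triples is again $O(\epsilon^2 q^{n/(t-1)})$. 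Without this (or an equivalent) splitting of the $(t-1,1,1)$ case, the argument is incomplete.
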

The proof of Theorem~\ref{thm:general_lower_bound} will use random choice with expurgation, just as in the proof of Theorem~\ref{thm:2_lower_bound} above. However, we have to work harder to provide a good enough bound on the number of randomly chosen words we need to remove to form our code $\cC$: the straightforward approach of Theorem~\ref{thm:2_lower_bound} is no longer sufficient. Here is a brief summary of the method that is used.

As before, we choose words $\bc_1,\bc_2,\ldots,\bc_N\in F^n$ uniformly and independently at random. We want to identify (and ultimately remove elements from) subsets $X,X'\subseteq\{\bc_1,\bc_2,\ldots,\bc_N\}$ such that $|X'|\leq|X|\leq t$ and $\desc(X)=\desc(X')$. We fix positive integers $r$, $s$ and $s'$ and restrict ourselves to the cases when $|X\cap X'|=r$, $|X\setminus X'|=s$ and $|X'\setminus X|=s'$. For most choices of $r$, $s$ and $s'$ (choices such that $(r,s,s')$ lies in a certain set $J$ defined below) the approach of Theorem~\ref{thm:2_lower_bound} will work: we define a collection of appropriate events $F_{A,B,B'}$, where $A,B,B'\subseteq\{1,2,\ldots,N\}$ are disjoint subsets of indices of sizes $r$, $s$ and $s'$ respectively. However, we cannot do this when $r=t-1$ and $s=s'=1$: the probability of the events $F_{A,B,B'}$ is too large and will affect the leading term. So we need to work harder to deal with this case. Let $j$ and $j'$ be defined by $X\setminus X'=\{\bc_j\}$ and $X'\setminus X=\{\bc_{j'}\}$. It turns out that the probability that $\desc(X)=\desc(X')$ is dominated by situations when $\bc_j$ and $\bc_{j'}$ agree in many positions. So, rather than have events $E_{j,j'}$ corresponding to equality of $\bc_j$ and $\bc_{j'}$ as in Theorem~\ref{thm:2_lower_bound}, we define $E_{j,j'}$ to be the event that $\bc_j$ and $\bc_{j'}$ agree in many positions. We then define events $G_{A,j,j'}$ that require $\desc(X)=\desc(X')$ and also require that $E_{j,j'}$ does not occur. The remainder of the proof proceeds as in Theorem~\ref{thm:2_lower_bound}, although various calculations become a little more complicated.
\begin{proof}[Proof of Theorem~\ref{thm:general_lower_bound}]
Let $q$ be a fixed positive integer, and let $F$ be a set of cardinality $q$. Let $N$ be an integer to be chosen later. Choose words $\bc_1,\bc_2,\ldots,\bc_N\in F^n$ uniformly and independently at random.

For any $i,j\in\{1,2,\ldots,N\}$ with $i<j$, let $E_{i,j}$ be the event that $\bc_i$ and $\bc_j$ agree in $n/(t-1)$ or more components. There are less than $2^n$ ways of choosing a set $K$ of $n/(t-1)$ or more components; the probability that the codewords $\bc_i$ and $\bc_j$ agree on the components in $K$ is at most $q^{-n/(t-1)}$. So the event $E_{i,j}$ occurs with probability at most $2^nq^{-n/(t-1)}$. Let $R_{i,j}$ be the indicator random variable for $E_{i,j}$. The expected value of $R_{i,j}$ is at most $2^nq^{-n/(t-1)}$.

Let $A$, $B$ and $B'$ be disjoint subsets of $\{1,2,\ldots,N\}$ such that $|B'|\leq |B|$. Define the event $F_{A,B,B'}$ as follows. Let $X=\{\bc_i\mid i\in A\cup B\}$ and let $X'=\{\bc_i\mid i\in A\cup B'\}$. Then $F_{A,B,B'}$ is the event that $\desc(X)=\desc(X')$. Since we have
\[
\desc(\{\bc_i:i\in B\})\subseteq\desc(X)\subseteq\desc(X')
\]
when our event occurs, we see that for $i\in B$ the $k$th component of $\bc_i$ must agree with the $k$th component of one of the elements of $X'$. So the probability of the event $F_{A,B,B'}$ is bounded above by $((|A|+|B'|)/q)^{|B|n}$. Let $S_{A,B,B'}$ be the indicator random variable corresponding to $F_{A,B,B'}$. Then the expected value of $S_{A,B,B'}$ is at most $(|A|+|B|)^{|B|n}q^{-|B|n}$.

Define $J$ to be the set of triples $(r,s,s')$ of integers such that:
\begin{align*}
0&\leq r\leq t-1,\\
1&\leq s\leq t-r,\\
0&\leq s'\leq s,\\
2&\leq r+s,\text{ and }\\
(r,s,s')&\not=(t-1,1,1).
\end{align*}
We will only be interested in the events $F_{A,B,B'}$ where $(|A|,|B|,|B'|)\in J$. 

We now define events associated with the element $(t-1,1,1)$ we have removed from $J$. Let $A$, $\{j\}$ and $\{j'\}$ be disjoint subsets of $\{1,2,\ldots,N\}$ of cardinalities $t-1$, $1$ and $1$ respectively. Define the event $G_{A,j,j'}$ as follows. Let $X=\{\bc_i\mid i\in A\cup \{j\}\}$ and let $X'=\{\bc_i\mid i\in A\cup \{j'\}\}$. Then $G_{A,j,j'}$ is the event that $\bc_j$ and $\bc_{j'}$ agree in less than $n/(t-1)$ components and $\desc(X)=\desc(X')$. There are less than $2^n$ ways of choosing a set $K$ of components of size less than $n/(t-1)$, and $G_{A,j,j'}$ is the union of the sub-events where $\bc_j$ and $\bc_{j'}$ agree exactly in the components in $K$.  Define
\[
Y=\{\bc_i\mid i\in A\}.
\]
We have $X(k)$ is the union of $Y(k)$ and the $k$th component of $\bc_j$. Similarly, $X'(k)$ is the union of  $Y(k)$ and the $k$th component of $\bc_{j'}$. The condition $\desc(X)=\desc(X')$ is equivalent to the condition that $X(k)=X'(k)$ for all components $k$. So on the components $k\notin K$ where $\bc_j$ and $\bc_{j'}$ differ, the $k$th components of $\bc_j$ and $\bc_{j'}$ must both lie in $Y(k)$; this happens with probability at most $((t-1)/q)^2$. When $k\in K$, the codewords  $\bc_j$ and $\bc_{j'}$ agree in their $k$th components, and this happens with probability $1/q$. When $q$ is sufficiently large, $((t-1)/q)^2<1/q$, and so the probability of the event $G_{A,j,j'}$ is at most
\[
2^n(((t-1)/q)^2)^{n-n/(t-1)}q^{-n/(t-1)}< (2t)^{2n} q^{-(2-1/(t-1))n}.
\]
The above expression is an upper bound on the expectation of the indicator random variable $T_{A,j,j'}$ corresponding to $G_{A,j,j'}$.

Define $\mathcal{R}=\{(i,j):1\leq i<j\leq N\}$, so $|\mathcal{R}|\leq N^2$. For integers $(r,s,s')\in J$, define $\mathcal{S}_{r,s,s'}$ to be the set of triples $(A,B,B')$ of disjoint subsets of $\{1,2\ldots,N\}$ with $|A|=r$, $|B|=s$ and $|B'|=s'$, so $|\mathcal{S}_{r,s,s'}|\leq N^{r+s+s'}$. Finally define $\mathcal{T}$ to be the set of all triples $(A,j,j')$ where $A$, $\{j\}$ and $\{j'\}$ are disjoint subsets of $\{1,2,\ldots,N\}$ and $|A|=t-1$. We see that $|\mathcal{T}|\leq N^{t+1}$.

We are interested in excluding the events $E_{i,j}$ where $(i,j)\in \mathcal{R}$, the events $F_{A,B,B'}$ where $(A,B,B')\in\mathcal{S}_{r,s,s'}$ with $(r,s,s')\in J$, and the events $G_{A,j,j'}$ where $(A,j,j')\in \mathcal{T}$. So we define a random variable $Z$ by
\[
Z=\sum_{\{i,j\}\in \mathcal{R}}R_{i,j}+\sum_{(r,s,s')\in J}\sum_{(A,B,B')\in \mathcal{S}_{r,s,s'}}S_{A,B,B'}+\sum_{(A,j,j')\in\mathcal{T}}T_{A,j,j'}.
\]
We have computed upper bounds on the expectation of the indicator variables in this expression, and on the number of terms in each sum. From these bounds, and from linearity of expectation, we can deduce that $Z$ has expected value at most $b$, where we define
\begin{align}
\nonumber
b_1&=N^22^nq^{-(1/(t-1))n},\\
\nonumber
b_2&=\sum_{(r,s,s')\in J}N^{r+s+s'}(r+s)^{sn}q^{-sn},\\
\nonumber
b_3&=N^{t+1}(2t)^{2n} q^{-(2-1/(t-1))n},\text{ and}\\
\label{eqn:b_defn}
b&=b_1+b_2+b_3.
\end{align}

We now proceed as in Theorem~\ref{thm:2_lower_bound}. There is a choice of the values $\bc_i$ such that at most $b$ of the events occur. Fix this choice. Define a subset $I\subseteq\{1,2,\ldots,N\}$ by adding $i$ for each event $E_{i,j}$ that occurs, adding a single element $i\in A\cup B\cup B'$ for each event $F_{A,B,B'}$ that occurs and adding a single element $i\in A\cup\{j\}\cup\{j'\}$ for each event $G_{A,j,j'}$ that occurs. We then define $\cC=\{\bc_i\mid i\in \{1,2,\ldots,N\}\setminus I\}$.

If $\bc_j=\bc_{j'}$ for some $j\not=j'$ then the event $E_{j,j'}$ holds, so one of $j$ and $j'$ lies in $I$, thus $\bc_j$ or $\bc_j'$ was removed when $\cC$ was formed. So all the codewords in $\cC$ are distinct. In particular, $|\cC|\geq N-b$.

We now show that $\cC$ is a $\overline{t}$-separable code. Suppose for a contradiction that there are distinct non-empty subsets $X,X'\subseteq \cC$ with $|X|\leq t$, $|X'|\leq t$ and $\desc(X)=\desc(X')$. Without loss of generality, we may assume that $|X'|\leq |X|$. Define $Y=X\cap X'$. Let $r=|Y|$, $s=|X\setminus Y|$ and $s'=|X'\setminus Y|$. If $|X|=1$ then $|X'|=1$ and we have a contradiction since all codewords are distinct. So we may assume that $|X|=r+s\geq 2$. We now see that $(r,s,s')\in J\cup\{(t-1,1,1)\}$. If $(r,s,s')\in J$, then the event $F_{A,B,B'}$ occurs, where $A$, $B$ and $B'$ are the indices of the codewords in $Y$, $X\setminus Y$ and $X'\setminus Y$ respectively. But then at least one element of $A\cup B\cup B'$ lies in $I$, and so we have a contradiction in this case. Suppose now that $(r,s,s')=(t-1,1,1)$. Then $X=Y\cup\{\bc_j\}$ and $X'=Y\cup\{\bc_{j'}\}$ for distinct $j,j'\in\{1,2,\ldots,N\}$. Without loss of generality, we may assume that $j<j'$. If $\bc_j$ and $\bc_{j'}$ agree in $n/(t-1)$ or more components, then the event $E_{j,j'}$ occurs and so $j\in I$. Thus $\bc_j\not\in \cC$, and we have a contradiction. Now suppose that $\bc_j$ and $\bc_{j'}$ agree in less than $n/(t-1)$ components. Then the event $G_{A,j,j'}$ occurs, where $A$ is the set of indices corresponding to $Y$. But at least one of the indices $A\cup\{j\}\cup\{j'\}$ lies in $I$, which gives us a contradiction in this final case. Hence $\cC$ is a $\overline{t}$-separable code, as required.

Finally, we need to show that $N$ can always be chosen so that $|\cC|\geq \kappa q^{n/(t-1)}$ for all sufficiently large $q$. We set $N=\lfloor\epsilon q^{n/(t-1)}\rfloor$, where $\epsilon$ is a sufficiently small positive constant depending only on $n$ and $t$. We will always take $\epsilon<1$. Note that $|\cC|=N-b=\epsilon q^{n/(t-1)}-b-O(1)$. We will now show that when $N=\lfloor\epsilon q^{n/(t-1)}\rfloor$,
\begin{equation}
\label{eqn:b_bound}
b\leq \epsilon^2\kappa' q^{n/(t-1)}
\end{equation}
for some constant $\kappa'$ which depends only on $n$ and $t$. Proving this inequality establishes the theorem. For if we choose $\epsilon<1/\kappa'$ we see that
$\epsilon-\epsilon^2\kappa'$ is positive and so
\[
|\cC|\geq N-b\geq (\epsilon-\epsilon^2\kappa') q^{n/(t-1)}-O(1)\geq \kappa q^{n/(t-1)}
\]
for all sufficiently large $q$, where $\kappa$ is any positive constant less than $\epsilon-\epsilon^2\kappa'$.

Recall that $b=b_1+b_2+b_3$. We establish~\eqref{eqn:b_bound} by bounding each of $b_1$, $b_2$ and $b_3$ in turn. First,
\[
b_1=N^22^nq^{-n/(t-1)}\leq \epsilon^22^nq^{n/(t-1)}=\epsilon^2a_1q^{n/(t-1)}
\]
where $a_1$ depends only on $n$. 

We now turn to providing a bound for $b_2$. The constant $b_2$ is defined as a sum over $(r,s,s')\in J$. We begin by bounding a term in this sum.

Suppose that $(r,s,s')\in J$. We have that $r+s\leq t$ and $1\leq s$. Moreover, $r+s+s'\geq 2$ and so
\begin{align*}
N^{r+s+s'}(r+s)^{sn}q^{-sn}&\leq \epsilon^2(r+s)^{sn}q^{(r+s+s'-(t-1)s)n/(t-1)}\\
&\leq\epsilon^2t^{tn}q^{(r+s'-(t-2)s)n/(t-1)}.
\end{align*}

We claim that $r+s'-(t-2)s\leq 1$ for all $(r,s,s')\in J$. For if $s\geq 2$, then
\[
r+s'-(t-2)s\leq r+s-2(t-2)\leq t-2(t-2)=-t+4\leq 1,
\]
since $t\geq 3$. If $s=1$ and $r<t-1$, then
\[
r+s'-(t-2)s\leq (t-2)+s-(t-2)=1.
\]
Finally, if $s=1$ and $r=t-1$, then $s'=0$ and so
\[
r+s'-(t-2)s=t-1-(t-2)=1.
\]
So our claim follows, and
\[
N^{r+s+s'}(r+s)^{sn}q^{-sn}\leq \epsilon^2t^{tn}q^{n/(t-1)}
\]
whenever $(r,s,s')\in J$. Since $|J|\leq (t+1)^3$, we find that
\[
b_2=\sum_{(r,s,s')\in J}N^{r+s+s'}(r+s)^{sn}q^{-sn}\leq \epsilon^2(t+1)^3t^{tn}q^{n/(t-1)}=\epsilon^2a_2q^{n/(t-1)},
\]
where $a_2$ is a constant depending only on $n$ and $t$.

Finally, we provide a bound for $b_3$. We can see that
\[
b_3=N^{t+1}(2t)^{2n}q^{-(2-1/(t-1))n}\leq \epsilon^2(2t)^{2n}q^{((t+1)-2(t-1)+1)(n/(t-1))}.
\]
Now $(t+1)-2(t-1)+1=4-t\leq 1$ since $t\geq 3$. Hence
\[
b_3\leq \epsilon^2a_3q^{n/(t-1)}
\]
for some constant $a_3$ depending only on $n$ and $t$. 

The bounds above combine with~\eqref{eqn:b_defn} to show that~\eqref{eqn:b_bound} holds, where $\kappa'=a_1+a_2+a_3$. So the theorem follows.
\end{proof}

\paragraph{Acknowledgements} The author would like to thank Ying Miao for comments on an earlier version of this manuscript, and for an inspiring conference talk on the area at ALCOMA 15. The author would like to acknowledge EU COST Action IC1104 `Random Network Coding and Designs over $GF(q)$' for support that allowed him to attend this conference talk.

\end{document}